\newtheorem{theorem}{Theorem}[section]
\newtheorem{lemma}[theorem]{Lemma}
\newtheorem{corollary}[theorem]{Corollary}
\newtheorem{proposition}[theorem]{Proposition}
\theoremstyle{remark}
\newtheorem{remark}{Remark}
\newcommand{\R}{{\mathord{\mathbb R}}}
\newcommand{\N}{{\mathord{\mathbb N}}}
\DeclareMathOperator{\supp}{supp}
\newcommand\numberthis{\addtocounter{equation}{1}\tag{\theequation}}
\newcommand{\T}{\mathbb{T}}
\newcommand{\id}{\mathbb{I}}
\newcommand{\di}{\,\mathrm{d}}
\newcommand{\abs}[1]{\left|#1\right|}
\newcommand{\norm}[1]{\left\|#1\right\|}
\title{Mixing in an anharmonic potential well}
\date{\today}
\author[M. Moreno]{Matías Moreno}
\address{Departamento de Ingenier\'ia Matem\'atica and Centro
de Modelamiento Matem\'atico (CNRS IRL 2807), Universidad de Chile, Beauchef 851, Santiago, Chile}
\email{mmoreno@dim.uchile.cl}
\author[P. Rioseco]{Paola Rioseco}
\email{paola.rioseco@uchile.cl}
\author[H. Van Den Bosch]{Hanne Van Den Bosch}
\email{hvdbosch@dim.uchile.cl}
\begin{document}

\maketitle

\begin{abstract}
    We prove phase-space mixing for solutions to Liouville's equation for integrable systems. Under a natural non-harmonicity condition, we obtain weak convergence of the distribution function with rate $\langle \mathrm{time} \rangle^{-1}$. In one dimension, we also study the case where this condition fails at a certain energy, showing that mixing still holds but with a slower rate. When the condition holds and functions have higher regularity, the rate can be faster. 
\end{abstract}

\bigskip

\noindent{\bf Acknowledgments.} We thank Olivier Sarbach and Jean Bricmont for interesting discussions during the preparation of this manuscript.
All authors received support from the Center for Mathematical Modeling (Universidad de Chile \& CNRS IRL 2807) through ANID/Basal projects  \#FB210005 and  \#ACE210010.
 P.R acknowledges partial support from Junior research fellowship from Erwin Schrödinger International Institute for Mathematics and Physics, University of Vienna. H.VDB. acknowledges partial support from ANID/Fondecyt project \#118--0355 and project France-Chile MathAmSud EEQUADDII 20-MATH-04.

\section{Introduction}
% We study the Liouville equation (collisionless Boltzmann equation) for particles in a potential well in $\R^d$. Throughout the paper, we assume that $V:\R^d \mapsto \R$ is a smooth potential with a single, non-degenerate minimum at the origin. The Liouville equation describes the evolution of a density $F$ on the phase space $\R^d \times \R^d$,
% and reads
% \begin{equation} \label{eq:liouville}
% \partial_t F(t,x,p) = - p \cdot \nabla_x  F + \nabla V \cdot \nabla_p F,
% \end{equation}
% where we use the convention that $\nabla_p$ and $\nabla_x$ refer to the the vector of $d$ partial derivatives with respect to the momentum (resp position) variables.

% To simplify the discussion, we will assume throughout that $V$ is of class $C^2$, that $V(0)=0 $ is absolute minimum, that $\operatorname{Hess}V(0)>0$, that $x \cdot \nabla V(x) >0$ if $x \neq 0$ and that $V(x) \to + \infty$ as $|x| \to +\infty$.
% This guarantees that constant-energy surfaces of $H= p^2/2 + V(x)$ are bounded and star-shaped with respect to to the origin. 

% If the system is integrable (for instance, if $d=1$ or if $V$ has spherical symmetry in dimensions $2,3$), the phase space can be parametrized by action-angle coordinates $(q,k) \in \T^d \times \R_+^d$. In these coordinates, the particle density satisfies the transport equation

We study Liouville's equation in action-angle coordinates
\begin{equation} \label{eq:transport}
   \partial_t f(t,q,k) + \omega(k) \cdot \nabla_q f(t,q,k) = 0,
\end{equation}
where $q \in \T^d$, the $d$-dimensional (flat) torus, are the \emph{angles} and $(k_1, \cdots k_n)\in K$ the conserved quantities in a suitable open set $K$. The function $\omega: K \mapsto \R^d$ gives the frequencies associated to each angle. We think of Liouville's equation as describing the evolution of a large number of gas molecules or collisionless kinetic gas without interactions. 
Liouville's theorem guarantees that, if the motion of a single particle in this system is integrable, there exist coodinates $(q,k)$ that bring the equation in the form~\eqref{eq:transport}. 

In classical mechanics, this occurs for instance in a potential well in one space-dimension
or for spherically symmetric potentials in dimensions $2, 3$. 
In these cases, in the \emph{physical} coordinates, Liouville's equation reads
\begin{equation} \label{eq:liouville}
\partial_t F(t,x,p) = - p \cdot \nabla_x  F + \nabla _x V \cdot \nabla_p F,
\end{equation}
which can be transformed (for an open set $K\subset \R^d$ of values of the conserved energy and angular momentum) into the form \eqref{eq:transport}.

But also in a relativistic context, geodesic motion in the Kerr family of space-times is integrable and Liouville's equation (or the collisionless Boltzmann equation) can be written in the form\eqref{eq:transport}.

If the system is anharmonic, in the sense that points with nearby energies move at different angular speeds $\omega(k)$, regular initial distributions will eventually stretch out to thin filaments that cover the region of phase space allowed by the conservation laws, as illustrated in Figure~\ref{Fig:EvolutionPHO}.

% Here, $a(k)$ denotes the area function and $\nabla_q$ the vector of $d$ partial derivatives with respect to the $q$-variables.
% Throughout the paper, we will use $f$ to refer to the distribution function in action-angle variables, and $F$ for the distribution function depending on position and momentum in Euclidean space. 
% \textcolor{violet}{Agregar la definición de $a$. [Hanne:] Traté de generalizar al caso de mas dimensiones pero no estoy segura de haberlo hecho bien.} 
% The action variables are defined by 
% \begin{eqnarray}
% k_j(x,p) = \frac{a(H(x,p))}{2\pi} = \frac{1}{2\pi} \oint _{C(E)} p_j \di x_j
% \end{eqnarray}

% where $C(E)$ is the (hyper)-surface in phase space with energy $E=|p|^2/2 + V(x)$. 
% The angle coordinates are the conjugate variables given by
% \begin{equation}
% q(x,p) := \frac{2\pi}{T(H(x,p))} \oint _{\gamma_x} \frac{dx}{p},
% \end{equation}

% %\cite{Arnold-Book}

% We think of Liouville's equation as describing the evolution of a (thermodynamically) large number of gas molecules or dust particles without interactions. 
% If the system is anharmonic in the sense that points with nearby energies move at different angular speeds, regular initial distributions will eventually stretch out to tin filaments that cover all of phase space, as illustrated in Figure~\ref{Fig:EvolutionPHO}. 

\begin{figure}[ht]
\centerline{
\resizebox{5cm}{!}{\includegraphics{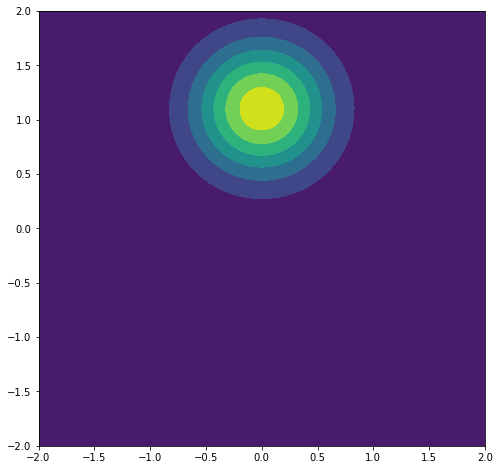}}
\resizebox{5cm}{!}{\includegraphics{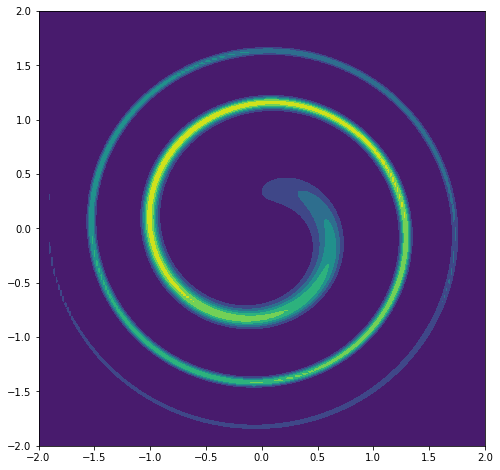}}
\resizebox{5cm}{!}{\includegraphics{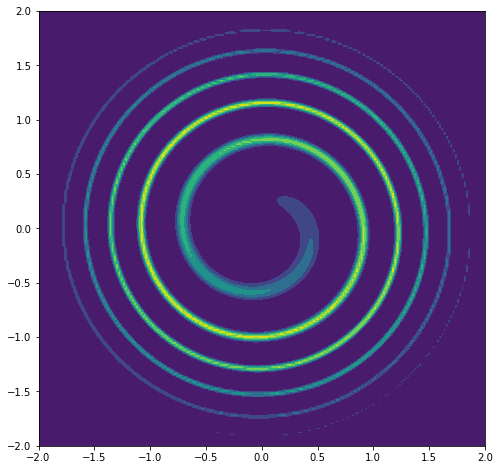}}
}
\caption{\label{Fig:EvolutionPHO} Snapshots at times $t= 0$, $40\pi$, $80 \pi $ of the evolution of a Gaussian initial condition in a perturbed harmonic oscillator with Hamiltonian $H= p^2/2 + x^2/2 + 0.3 \, x^4 $, aproximate at first order in perturbation theory.  
% distribution function corresponding to initial datum $f_0\left(x, p \right)= e^{-\frac{(x - \tilde{x}_0)}{\sigma}^2} e^{-\frac{(p - p_0)}{\sigma}^2}, \quad (x,p) \in \R^2 $ with values of $\sigma= 0.6$, $p_0=1.1 $ and $\tilde{x}_0 =0 $ and the value of perturbation $\epsilon = 0.3$ in the perturbed harmonic oscillator model $H(x,p)= p^2/2 + x^2/2 + \epsilon x^4 $ .
}
\end{figure}

This phenomenon is called phase-space mixing.
It leads to weak convergence in the sense that, for any measurement of a macroscopic quantity, encoded in a \emph{test function}\footnote{Throughout this paper, we will use the term ``test function" loosely to designate ``a function against which the distribution function is tested", these functions need not to belong to $C_c^\infty$. Mathematically, the roles of $\phi$ and $f$ are symmetric by the time-reversal invariance of the evolution, while conceptually $f$ is an empirical density associated to a large number of particles, and only its averages over regions of phase space (i.e., the support of $\phi$) have a physical significance.} 
$\phi(q,k)$, its value satisfies
$$
\lim_{t \to \infty}  \int_{K}\int_{\T^d}
f(t,q,k) \phi(q,k) \di q \di k =   \int_{K}
\bar f_0(k) \int_{\T^d} \phi(q,k) \di q \di k.
$$
Here and throughout the paper, we use a bar to denote the average over the periodic variables, i.e.,
$$
\bar f_0(k) := \frac{1}{(2\pi)^d}\int_{\T^d} f_0(q,k) \di q.
$$

The relevance of phase space mixing in clusters of stars had been pointed out historically by Lynden--Bell \cite{dL62,dL67} and was highlighted more recently in Mouhot and Villani's proof of Landau Damping in the Vlasov-Poisson system on the torus \cite{cMcV11}. 
This sparked interest in proving mixing in linear models that describe astrophysical systems \cite{pRoS18} or are reasonable \emph{toy models} for these systems \cite{pRoS18b}.
The recent paper \cite{sCjL21} studies the one-dimensional Liouville equation with a slightly anharmonic potential $ V(x)= \frac{x^2}{2} + \epsilon \frac{x^4}{2}$ and proves the time convergence of the (one-dimensional) Coulomb potential generated by this distribution. The authors of \cite{sCjL21} use the so-called vector-field method and obtain a rate of convergence. The vector-field method has been introduced by Klainerman \cite{klainerman1985uniform} in the context of wave equations and has been applied to transport equations, or the Vlasov--Poisson system, for instance in \cite{smulevici2016small, fajman2017vector, wong2018commuting}.   

\medskip
In this paper, we apply the vector-field method to general integrable systems and any choice of test function $\phi$. We obtain power-like convergence to the the equilibrium value $\int \phi \bar f_0$. 
Since we are studying essentially a transport equation in $\T^d$ rather than $\R^d$, the rate of decay does not improve with dimension. 
We will assume throughout that $\omega: K \mapsto \R^d$ is of class $C^2$ and use the notation $D \omega$ for its Jacobian matrix, i.e.,
$$
(D \omega)_{j,l}(k) = \partial_{j} \omega_l (k).  
$$
With these preliminaries in place, we can state our main theorem. 
\begin{theorem} \label{thm:main}
Let $f(t,q,k)$ be the solution to \eqref{eq:transport} with initial datum $f_0 \in C^1(\T^d\times K)$. Assume that $\phi \in C^1_c(\T^d \times K)$ is bounded, and that 
\begin{equation} \label{eq:cond_thm}
    \omega \in C^2(K), \quad \text{ and } \det D \omega (k) \neq 0,  \text{ for all } k \in K
\end{equation}
then there exists $C$ depending on $\omega$, $f_0$ and $\phi$ such that
$$
\abs{\int_{K}\int_{\T^d}
(f(t,q,k)- \bar f_0 (k) ) \phi(q,k) \di q \di k} \le \frac{C}{1+|t|}.
$$
\end{theorem}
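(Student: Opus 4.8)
The plan is to represent the solution explicitly by the method of characteristics, expand it in a Fourier series in the angle variable $q$, and extract decay from each nonzero Fourier mode by a single integration by parts in $k$. The non-degeneracy hypothesis $\det D\omega\neq 0$ is exactly what guarantees that the resulting phase has no critical points, so that this (the simplest instance of the vector-field method) produces $|t|^{-1}$ decay.

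First I would note that, since $\omega$ depends only on $k$, the characteristics of \eqref{eq:transport} are straight lines and the solution is $f(t,q,k)=f_0(q-\omega(k)t,k)$. Writing $f_0(q,k)=\sum_{n\in\Z^d}\hat f_0(n,k)\,e^{in\cdot q}$, this becomes
$$
f(t,q,k)=\sum_{n\in\Z^d}\hat f_0(n,k)\,e^{-it\,n\cdot\omega(k)}\,e^{in\cdot q}.
$$
The $n=0$ term is precisely $\bar f_0(k)$ and is stationary, so $f(t,\cdot,k)-\bar f_0(k)$ is carried by the modes $n\neq0$. Pairing against $\phi$ and integrating in $q$ leaves, for each $n$, the amplitude $\psi_n(k):=\int_{\T^d}e^{in\cdot q}\phi(q,k)\,\di q$ (a constant multiple of $\hat\phi(-n,k)$), so that
$$
\int_K\int_{\T^d}(f-\bar f_0)\,\phi\,\di q\,\di k=\sum_{n\neq0}I_n(t),\qquad I_n(t)=\int_K g_n(k)\,e^{-it\,n\cdot\omega(k)}\,\di k,
$$
with $g_n:=\hat f_0(n,\cdot)\,\psi_n$. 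Since $\phi\in C^1_c$, each $g_n$ is $C^1$ and supported in a fixed compact set $K'\subset K$.

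Next I would exploit the non-degeneracy. The phase gradient is $\nabla_k(n\cdot\omega)=(D\omega)^{\!\top}n$, and because $\det D\omega\neq0$ is continuous on the compact $K'$, there is $c_0>0$ with $|(D\omega(k))^{\!\top}n|\ge c_0|n|$ for all $k\in K'$. Hence the amplitude $a_n(k):=(D\omega)^{\!\top}n/|(D\omega)^{\!\top}n|^2$ satisfies $a_n\cdot\nabla_k(n\cdot\omega)=1$, so that $e^{-it\,n\cdot\omega}=\tfrac{1}{-it}\,a_n\cdot\nabla_k e^{-it\,n\cdot\omega}$, and one integration by parts (no boundary terms, by compact support) gives
$$
I_n(t)=\frac{1}{it}\int_{K'}\nabla_k\cdot\big(g_n\,a_n\big)\,e^{-it\,n\cdot\omega(k)}\,\di k.
$$
Because $a_n$ is homogeneous of degree $-1$ in $n$ and $\omega\in C^2$, both $a_n$ and $\nabla_k a_n$ are $O(|n|^{-1})$ uniformly on $K'$, whence $|I_n(t)|\le \frac{C}{|t|\,|n|}\int_{K'}\big(|g_n|+|\nabla_k g_n|\big)\,\di k$.

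The last and most delicate step is summability in $n$, which is where I expect the main obstacle to lie: the per-mode decay constants a priori grow with $|n|$, so one must show they are summable. Writing $g_n$ and $\nabla_k g_n$ through the Fourier coefficients of $f_0,\nabla_k f_0$ and of $\phi,\nabla_k\phi$, I would bound $\sum_{n\neq0}\tfrac{1}{|n|}\int_{K'}(|g_n|+|\nabla_k g_n|)$ by Cauchy--Schwarz in $n$ together with Parseval: each factor such as $\sum_n|\hat f_0(n,k)|^2$ or $\sum_n|\nabla_k\hat f_0(n,k)|^2$ is controlled by the $L^2(\T^d)$ norm of $f_0$, respectively $\nabla_k f_0$, which are finite since $f_0\in C^1$ and the $k$-integration runs over the compact $K'$. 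It is precisely the combination of the uniform lower bound $|(D\omega)^{\!\top}n|\gtrsim|n|$ (making $a_n,\nabla_k a_n=O(|n|^{-1})$) with the square-summability of the $C^1$ Fourier data that makes the series converge; this absolute summability also justifies the interchange of sum and integral above. The result is $\sum_{n\neq0}|I_n(t)|\le C/|t|$. Finally, combining this with the trivial bound $|\int(f-\bar f_0)\phi|\le C$, valid for all $t$ since $f_0$ and $\phi$ are bounded on $K'$, interpolates to the stated estimate $C/(1+|t|)$.
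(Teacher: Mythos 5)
Your proof is correct, but it takes a genuinely different route from the paper's. You use the explicit solution by characteristics, expand in Fourier modes $e^{in\cdot q}$, and extract the decay mode by mode via non-stationary-phase integration by parts, with $\det D\omega\neq 0$ providing the amplitude $a_n=(D\omega)^{\top}n/|(D\omega)^{\top}n|^{2}$ satisfying $|a_n|,|\nabla_k a_n|=O(|n|^{-1})$ on the compact $k$-support of $\phi$; the mode-wise bounds are then summed using Cauchy--Schwarz and Parseval (and this is indeed the delicate step, which you handle adequately: the $C^1$ data give square-summable coefficients, and the extra $|n|^{-1}$ from the amplitude is not even needed for convergence). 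The paper never takes a Fourier transform: it introduces the commuting vector field $W=tD\omega\nabla_q+\nabla_k$, writes $f-\bar f_0$ via the fundamental theorem of calculus in $q$, substitutes $\nabla_q=t^{-1}(D\omega)^{-1}(W-\nabla_k)$, integrates by parts in $k$, and converts everything into time-independent quantities through the conservation law \eqref{eq:conservation}, which reduces $\iint|Wf|$ at time $t$ to $\iint|\partial_k f_0|$ at time $0$. Your argument is essentially the frequency-space counterpart of this, and it is tied to the explicit solution formula, whereas the paper's proof uses only the commutation relation $[L,W]=0$, which is what makes the vector-field method portable to settings without such a formula. The paper's route also produces the explicit constants of Propositions~\ref{prop:1-d-bound} and~\ref{prop:d>1} --- weighted $L^1$/$L^\infty$ norms involving $1/\omega'$ --- and that explicit dependence is precisely what drives the localization proof of Theorem~\ref{thm:1-d-degenerate} and the relaxed hypotheses exploited in Corollary~\ref{cor:coulomb}; your Parseval-based constants are finite but would make those extensions less transparent. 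Conversely, your approach displays per-mode decay $|t|^{-1}|n|^{-1}$, information that the physical-space argument does not exhibit.
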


\begin{remark}
The constant $C$ depends on the initial data, on the test function $\phi$, and on the inverse of $D\omega$. In Propositions~\ref{prop:1-d-bound} and~\ref{prop:d>1} below, we give more precise statements that allow to relax the hypotheses and estimate the constant for concrete cases.
\end{remark}
\begin{remark}
For the case of a particle in a one-dimensional potential well, $2\pi/\omega(k)$ is the inverse of the period $T$ of the trayectories, which in turn is the derivative of the area function $\Pi$, see e.g., \cite[Section 50]{Arnold-Book}. 
In terms of the potential $V$, we have
$$ 
T(h) =  \Pi' (h) , \qquad \Pi(h) := \oint_{\frac{p^2}{2} + V(x) = h} p \di x .
$$
Apart from regularity issues, condition~\eqref{eq:cond_thm} is simply $T'(h)= \Pi''(h) \neq 0$. For one-dimensional systems (2-dimensional phase space), there is an extensive literature on the monotonicity properties of the period function $T(h)$. For many potentials, $T'(h)$ has a definite sign, see e.g. \cite{cC87,sCdW86, freire2004first,fR93}. In \cite[Appendices B and C]{pRoS18b}, these conditions are specified to several potentials relevant in astrophysics.
\end{remark}

Even if the condition $\det D\omega \neq 0$ fails at some points, mixing may still hold. For simplicity, we state this result in the one-dimensional case and for a linearly vanishing $\omega'$. 

\begin{theorem}\label{thm:1-d-degenerate}
Fix $f_0$ and $\phi$ of class $C^1$, with compact support, and let $f$ denote the corresponding solution to Liouville's equation. 
Assume that $\omega \in C^2(K)$, and $\omega'(k) \neq 0$ except for $k$ in the finite set $\{ k_1, \cdots ,k_N \}$, and that $\omega''(k_i) \neq 0$. 
Then, there is $C>0$ such that
\begin{align*}
    &\abs{ \int_{K} \int_{\T} (f(t,q,k)- \bar f (k)) \phi(q,k)} \di q \di K  \le
   \frac{C}{1+|t|^{1/3}} \qedhere .
\end{align*}
\end{theorem}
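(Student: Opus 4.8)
The plan is to reduce to the nondegenerate estimate away from the critical points and to control a shrinking neighborhood of $\{k_1,\dots,k_N\}$ by hand, balancing the two regions with a free parameter. Since we are in one dimension the solution is explicit, $f(t,q,k)=f_0(q-\omega(k)t,k)$, and writing $g:=f_0-\bar f_0$, which has zero average in $q$, we obtain $f(t,q,k)-\bar f_0(k)=g(q-\omega(k)t,k)$. So, assuming $t>0$ (the case $t<0$ being identical by time reversal), I must estimate
\[
I(t)=\int_K\int_\s g(q-\omega(k)t,k)\,\phi(q,k)\di q\di k .
\]
First I would fix a small parameter $\delta>0$ and split the (compact) $k$-support of $\phi$ as $K^{\mathrm{in}}_\delta\cup K^{\mathrm{out}}_\delta$, where $K^{\mathrm{in}}_\delta=\bigcup_{i=1}^N\{\abs{k-k_i}<\delta\}$ and $K^{\mathrm{out}}_\delta$ is the complement. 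Because $\omega''(k_i)\neq0$ and $\omega''$ is continuous, on a fixed neighborhood of each $k_i$ one has the linear lower bound $\abs{\omega'(k)}\ge c\abs{k-k_i}$; combined with the fact that $\omega'$ is bounded away from $0$ far from the $k_i$, this gives $\abs{\omega'(k)}\ge c\,\delta$ on all of $K^{\mathrm{out}}_\delta$ for $\delta$ small.

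On the inner region I would simply discard the oscillation and bound crudely,
\[
\abs[\Big]{\int_{K^{\mathrm{in}}_\delta}\int_\s g\,\phi}\le 2\pi\,\norm{g}_\infty\norm{\phi}_\infty\,\mes(K^{\mathrm{in}}_\delta)\le C\,N\,\delta .
\]
On the outer region I would run the integration by parts that underlies Proposition~\ref{prop:1-d-bound}: introducing the $q$-antiderivative $G(q,k)=\int_0^q g(s,k)\di s$, which is periodic since $\bar g=0$, one has the identity $g(q-\omega t,k)=\frac{1}{-\omega'(k)t}\bigl(\partial_k[G(q-\omega t,k)]-G_k(q-\omega t,k)\bigr)$, and integrating by parts in $k$ produces a bulk term weighted by $\partial_k(1/\omega')=-\omega''/(\omega')^2$, a term weighted by $1/\omega'$, and boundary terms on $\partial K^{\mathrm{out}}_\delta$. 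Bounding the weights by their suprema on $K^{\mathrm{out}}_\delta$, namely $\sup 1/\abs{\omega'}\sim\delta^{-1}$ and $\sup\abs{\omega''}/(\omega')^2\sim\delta^{-2}$, gives
\[
\abs[\Big]{\int_{K^{\mathrm{out}}_\delta}\int_\s g\,\phi}\le\frac{C}{\delta^2\,t},
\]
the boundary contributions (where $\abs{\omega'}\sim\delta$, and where $\phi$ vanishes at the outer edge) being of the smaller order $\delta^{-1}t^{-1}$.

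Combining the two regions yields $\abs{I(t)}\le C\bigl(N\delta+\delta^{-2}t^{-1}\bigr)$, and choosing $\delta\sim t^{-1/3}$ optimizes the right-hand side to $C\,t^{-1/3}$; together with the trivial bound $\abs{I(t)}\le C$ for small $\abs t$ this produces the stated $C/(1+\abs t^{1/3})$. The step I expect to be the main obstacle is the outer estimate: one must verify that the reusable nondegenerate bound genuinely scales like $\delta^{-2}$, and that the boundary terms created by truncating $\phi$ at $\partial K^{\mathrm{out}}_\delta$ are truly lower order. The $\delta^{-2}$ scaling is precisely the loss incurred by estimating $\omega''/(\omega')^2$ by its supremum rather than by its integral over $K^{\mathrm{out}}_\delta$; it is this choice (natural if one wants to quote the nondegenerate proposition as a black box) that yields the rate $t^{-1/3}$ rather than the $t^{-1/2}$ one would obtain from a sharper stationary-phase treatment of the inner region.
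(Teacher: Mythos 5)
Your proof is correct and follows essentially the same strategy as the paper: localize around the degenerate points at scale $\delta$, bound the inner region crudely by $O(\delta)$, run the nondegenerate integration-by-parts argument on the outer region at cost $O(\delta^{-2}t^{-1})$, and optimize $\delta \sim t^{-1/3}$. The only difference is one of implementation: the paper multiplies $\phi$ by a smooth cutoff $\eta_\epsilon$ so that Proposition~\ref{prop:1-d-bound} applies verbatim (no boundary terms, with the $\epsilon$-dependence entering through $\norm{\eta_\epsilon/\omega'}_{L^\infty}\sim \epsilon^{-1}$ and $\norm{\partial_k(\eta_\epsilon/\omega')}_{L^\infty}\sim \epsilon^{-2}$), whereas you use sharp cutoffs with the explicit solution and control the boundary terms at $\abs{k-k_i}=\delta$ by hand; both yield the same $\delta^{-2}$ loss and the same rate.
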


In a different direction, if condition \eqref{eq:cond_thm} holds and the functions involved have a better regularity, the strategy in the proof of Theorem~\ref{thm:main} can be iterated to obtain a better rate of decay. Again, we state the one-dimensional result for simplicity.
\begin{theorem} \label{thm:1-d-improved}
For $d=1$ and under the hypotheses of Theorem~\ref{thm:main}, assume that additionally, $\omega'(k)^{-1} \in C^l(K)$, $f_0, \phi \in C^l(\T\times K)$ for some $l\ge 2$. 
Then there exists $C>0$ depending on $\omega$, $f_0$ and $\phi$ such that
$$
\abs{\int_{K}\int_{\T}
(f(t,q,k)- \bar f_0 (k) ) \phi(q,k) \di q \di k} \le \frac{C}{1+|t|^l}.
$$
\end{theorem}

A striking consequence is that mixing is actually super-polynomial when $\omega$, $f_0$ and $\phi$ are of class $C^\infty$.

\medskip
Finally, we study the Coulomb potential generated by a particle density $F$. We will use the notation $F$ for the density in the physical phase space $\R^d \times \R^ d$ and $f = F \circ N$ for the density in action-angle coordinates. The motivation to consider the Coulomb potential in particular, is to take into account the gravitational self-interaction (the Vlasov--Poisson system). As in \cite{sCjL21}, the results that we prove remain insufficient to treat the nonlinear equation. This is natural, since we don't expect in general that $\bar f_0$ is a stationary state for the Vlasov--Poisson system.

The Coulomb potential can be written as the integral of $F$ against a \emph{test function} with a singularity, which can be compensated by requiring some extra regularity of $F$. 
For a given $F$ defined in Euclidean space, we define the Coulomb potential generated by its particle density as the unique solution to 
\begin{equation}
    \label{eq:def-coulomb}
    -\Delta V_F (x) = \int_{\R^d} F(x,p) \di p,  \text{ and } \begin{cases} V_F(0) = 0 &\text{ if } d=1 \\
\lim_{|x|\to \infty} V_F(x) = 0 &\text{ if } d\ge 2 .
\end{cases}
\end{equation}
We will assume that the system with Hamiltonian $H(x,p)= |p|^2/2 + V(x)$ is integrable and denote  
$N: \T^d \times K \mapsto G \subset \R^d \times \R^d$ for the transformation from action-angle variables to the position and momentum, where $G$ is the open set of values of position and momenta for which this transformation is well-defined and invertible. Then we have the following corollary.

\begin{corollary}\label{cor:coulomb}
 Assume that $N$ is a $C^1$-diffeomorphism, and that the frequencies $\omega(k)$ satisfy \eqref{eq:cond_thm}.
 Let $F_0 \in C^1_c(G) \cap L^1(G)$ .
Denote by $F$ the solution to Liouville's equation \eqref{eq:liouville}, then
$$
\norm{V_F -V_{\widetilde{F}_0}}_{L^\infty} \le \frac{C}{1+ |t|},
$$
where 
$$
\widetilde F_0  := \overline{(F_0\circ N)}\circ N^{-1} .
$$
\end{corollary}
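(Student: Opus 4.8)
The plan is to rewrite the $L^\infty$ difference of the two Coulomb potentials as a single phase–space integral of the type controlled by Theorem~\ref{thm:main}, tested against a \emph{singular} function built from the Green's function of $-\Delta$, and then to recover the rate $1/(1+|t|)$ in spite of that singularity. Write $G(x,y)$ for the Green's function of $-\Delta$ on $\R^d$ in the normalization dictated by~\eqref{eq:def-coulomb} (the Newtonian kernel $\sim |x-y|^{-(d-2)}$ for $d\ge 3$, $\sim -\log|x-y|$ for $d=2$, and the Lipschitz kernel $\tfrac12(|y|-|x-y|)$ for $d=1$, which already encodes $V_F(0)=0$). By~\eqref{eq:def-coulomb}, $V_F(t,x)=\int_{\R^d}G(x,y)\,\rho_F(t,y)\di y$ with $\rho_F(t,y)=\int_{\R^d}F(t,y,p)\di p$. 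Since the action–angle map $N$ is canonical, hence volume preserving, and $f=F\circ N$, a change of variables gives
$$
V_F(t,x)=\int_K\int_{\T^d}G\bigl(x,x(q,k)\bigr)\,f(t,q,k)\di q\di k,
$$
where $x(q,k)$ denotes the position component of $N(q,k)$. The same computation applied to $\widetilde F_0=\bar f_0\circ N^{-1}$, which depends on $k$ only in action–angle coordinates, yields $V_{\widetilde F_0}(x)=\int_K\int_{\T^d}G(x,x(q,k))\,\bar f_0(k)\di q\di k$, so that, with the $x$-indexed test function $\phi_x(q,k):=G(x,x(q,k))$,
$$
V_F(t,x)-V_{\widetilde F_0}(x)=\int_K\int_{\T^d}\bigl(f(t,q,k)-\bar f_0(k)\bigr)\,\phi_x(q,k)\di q\di k .
$$

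This is exactly the quantity estimated in Theorem~\ref{thm:main}, except that $\phi_x$ is \emph{not} in $C^1_c$: it inherits the singularity of $G$ along the set $\{x(q,k)=x\}$. The point I would exploit is that the quantitative bounds of Propositions~\ref{prop:1-d-bound} and~\ref{prop:d>1} control the constant not by a sup-norm but by an \emph{integrated} first-derivative norm of the test function, since the vector-field/Fourier argument integrates by parts once in $k$ and pairs $\nabla\phi$ against $f_0$. I would therefore verify that $\phi_x\in W^{1,1}(\T^d\times K)$ with norm bounded \emph{uniformly in $x$}: by the chain rule $\nabla_{q,k}\phi_x=\nabla_y G(x,x(q,k))\,D_{q,k}x(q,k)$, and since $N$ is a $C^1$-diffeomorphism and $F_0$ has compact support, a change of variables reduces the $W^{1,1}$ norm to $\int_{\supp}|\nabla_y G(x,y)|\di y$. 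Because $|\nabla_y G(x,y)|\lesssim|x-y|^{-(d-1)}$ is locally integrable in every dimension, this integral is finite and bounded uniformly in $x$ (indeed it tends to $0$ as $|x|\to\infty$), which disposes of the supremum over $x$ in the $L^\infty$ norm.

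With this in hand I would approximate $\phi_x$ by mollified, compactly supported functions $\phi_x^\varepsilon\in C^1_c$, apply the quantitative mixing estimate to each $\phi_x^\varepsilon$, and pass to the limit $\varepsilon\to 0$, using the uniform $W^{1,1}$ bound on the $\phi_x^\varepsilon$ together with $\|f(t,\cdot)-\bar f_0\|_\infty\le 2\|f_0\|_\infty$ to control the approximation error. The assumptions that $F_0$ \emph{and} $\nabla F_0$ are bounded enter precisely here: through $f_0=F_0\circ N$ and the chain rule they provide the bounds on $f_0$ and $\nabla_k f_0$ that the Proposition requires on the $f_0$-side of the pairing, keeping the constant finite after the limit. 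Collecting the estimates gives $|V_F(t,x)-V_{\widetilde F_0}(x)|\le C/(1+|t|)$ with $C$ independent of $x$, and taking the supremum over $x$ yields the claim.

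The step I expect to be the main obstacle is the second one: confirming that the mixing constant in Propositions~\ref{prop:1-d-bound}–\ref{prop:d>1} is genuinely governed by an $L^1$-type norm of $\nabla\phi$ rather than by $\|\phi\|_{C^1}$, so that the non-integrable \emph{pointwise} blow-up of $\nabla_y G$ at $\{x(q,k)=x\}$ becomes harmless after integration. This integrated form is what makes the argument dimension-uniform: a naive near/far cutoff at scale $\delta$ around the singular set, combined with a sup-type constant $\sim\delta^{-(d-1)}$ for the far part and $\int_{|x-y|\le\delta}|G|\lesssim\delta^2$ for the near part, optimizes only to $t^{-2/(d+1)}$ when $d\ge 2$, losing the sharp rate. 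Establishing the integrated-norm estimate, uniformly in $x$, is thus the crux of the proof.
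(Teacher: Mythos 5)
Your proposal is correct and follows essentially the same route as the paper: both reduce $V_F - V_{\widetilde F_0}$ to the action--angle pairing of $f - \bar f_0$ against the composed Coulomb kernel $\varphi_{x}= \phi_{x}\circ N$, and both rest on the same key observation that, by exchanging the roles of $f$ and $\phi$ in the proofs of Propositions~\ref{prop:1-d-bound} and~\ref{prop:d>1}, the mixing constant can be taken proportional to $L^1$-type norms of the test function and its $k$-derivatives times $L^\infty$-norms of $f_0$ and $\nabla_k f_0$, which is precisely where the hypotheses that $F_0$ and $\nabla F_0$ be bounded enter in the paper as well. The only divergence is technical bookkeeping: the paper splits $\nabla_k \varphi_{x} \in L^1 + L^\infty$ (near versus far from the singularity) and applies the two versions of Proposition~\ref{prop:d>1} to the two pieces, whereas you use a uniform-in-$x$ $W^{1,1}$ bound on the compact support of $f_0$ together with a mollification and limiting argument---an equally sound variant of the same idea.
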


The remainder of this paper is organized as follows. In section~\ref{sec:1-d}, we prove the one-dimensional case of Theorem~\ref{thm:main}, and Theorems~\ref{thm:1-d-degenerate}~and~\ref{thm:1-d-improved}. In section~\ref{sec:higher-d}, we prove the general case of Theorem~\ref{thm:main} and its Corollary~\ref{cor:coulomb}.

\section{The one-dimensional case.} \label{sec:1-d}

The main tool in the proof of this theorem is the \emph{vector field} 
\begin{equation} \label{eq:def_W}
    W := \omega'(k) t \partial_q + \partial_k.
\end{equation}

A straightforward calculation shows that $W$ commutes with the Liouville operator
$$
L:= \partial_t + \omega(k) \partial_q.
$$
Therefore, if $f$ solves the transport equation \eqref{eq:transport}, the same goes for $W^n f$ (and $|W^n f|$), for any $n = 0,1,2, \dots$ and thus, 
\begin{equation} \label{eq:conservation}
    \iint \abs{W^n f}(t,q,k) g(k) \di k \di q = \iint \abs{W^n f_0}(q,k) g(k) \di k \di q,
\end{equation}
for sufficiently regular functions $f$ and $g$.
As usual, we will use this property to obtain time-indepent bounds.

\begin{proposition} \label{prop:1-d-bound}
Let $f$ denote the solution to \eqref{eq:transport} with initial data $f_0 \in L^1$ and fix $\phi \in L^\infty$. Assume that either $f$ or $\phi$ have compact support in $\T^d \times K$. 
Then, provided all terms on the right-hand-side are finite,
\begin{align*}
  &  \Bigl|\int_{
  K}\int_{\T} 
(f(t,q,k)- \bar f_0 (k) ) \phi(q,k) \di q \di k \Bigr| \\
&\qquad \le \frac{2\pi }{t}\left(
\norm{ \frac{\bar \phi}{\omega'} \partial_k f_0 }_{L^1}  + \norm{\bar f_0\partial_k \frac{ \phi}{\omega'}  }_{L^1} 
\right) \numberthis \label{eq:1-d-bound}
\end{align*}

\end{proposition}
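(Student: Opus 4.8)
The plan is to combine the explicit solution by characteristics with the single commuting field $W$, deferring the use of the conservation law \eqref{eq:conservation} to the very last step. First I would solve \eqref{eq:transport} along characteristics, $f(t,q,k)=f_0(q-\omega(k)t,k)$, and record that $f-\bar f_0$ has vanishing $q$-average for every $k$. This lets me replace $\phi$ by $\phi-\bar\phi$ in the integral without changing its value, and then write $\phi-\bar\phi=\partial_q\Phi$, where $\Phi(\cdot,k)$ is the mean-zero primitive in $q$. An integration by parts in $q$ turns the quantity to be controlled into $-\int_K\int_{\T}(\partial_q f)\,\Phi\,\di q\,\di k$, isolating a single $q$-derivative of the transported density.

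The decay is produced by the defining identity of $W$. Since $Lf=0$, one has $\omega'(k)\,t\,\partial_q f = Wf-\partial_k f$, that is $\partial_q f = \frac{1}{\omega'(k)\,t}\,(Wf-\partial_k f)$. Substituting this is exactly where the factor $t^{-1}$ appears, and it is legitimate precisely because $\omega'\neq0$ (the one-dimensional form of $\det D\omega\neq0$). I would then split into the $Wf$-term and the $\partial_k f$-term; on the latter I integrate by parts once more, now in $k$, moving the derivative onto $\Phi/\omega'$. The boundary terms vanish because either $f$ or $\phi$ has compact support in $\T\times K$. The outcome is a representation of the shape $\tfrac1t\big(\text{term in }Wf + \text{term in }\partial_k(\phi/\omega')\big)$, which already exhibits the two-term structure of the claimed bound.

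Finally I would estimate each term. For the first, pulling the absolute value inside and invoking \eqref{eq:conservation} with a weight $g=g(k)$ converts $\int_K\int_{\T}\abs{Wf}\,g$ into its time-independent initial value $\int_K\int_{\T}\abs{\partial_k f_0}\,g$, reproducing the $\partial_k f_0$ contribution; the second term is bounded directly by $\norm{\bar f_0\,\partial_k(\phi/\omega')}_{L^1}$. The constant $2\pi$ is the period of $\T$, entering through the $q$-average. The step I expect to be the main obstacle is organizing the mixed $(q,k)$ norms so that \eqref{eq:conservation}, which only accepts a $k$-dependent weight, can actually be applied: the weight arising naturally still carries a $q$-dependent (primitive) factor, and disposing of it — by bounding it through the periodic average, so that $\phi$ and $f$ are replaced by $\bar\phi$ and $\bar f_0$ in the final weights — is the delicate point, together with ensuring the integrability of $1/\omega'$ on $\supp\phi$ that underlies the whole argument.
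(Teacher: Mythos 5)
Your proposal is correct and yields \eqref{eq:1-d-bound} with the same constant, but your opening reduction is genuinely different from the paper's. The paper never introduces a primitive of $\phi$: it writes $f(t,q,k)-\bar f_0(k)$ as the $q'$-average of $f(t,q,k)-f(t,q',k)$, applies the fundamental theorem of calculus in $q$, and only then substitutes $\partial_q=(\omega'(k)t)^{-1}(W-\partial_k)$. The payoff of those extra integration variables is that $\partial_q f$ is evaluated at $\tilde q$ while $\phi$ is evaluated at $q$, so after Fubini the factor multiplying $\abs{Wf}$ is $\int_{\T}\abs{\phi(q,k)}\di q$ --- a function of $k$ alone --- and \eqref{eq:conservation} applies verbatim. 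Your dual move (replace $\phi$ by $\phi-\bar\phi=\partial_q\Phi$ and integrate by parts onto $f$) avoids the triple integral, but it leaves $Wf$ and $\Phi$ sharing the same $q$-variable, which is exactly the obstacle you flag at the end. That obstacle does close: for mean-zero $h\in L^1(\T)$, its mean-zero primitive $H$ satisfies $\norm{H}_{L^\infty}\le\tfrac12\norm{h}_{L^1}$, whence $\sup_q\abs{\Phi(q,k)}\le 2\pi\,\overline{\abs{\phi}}(k)$; moreover $\partial_k(\Phi/\omega')$ is the mean-zero $q$-primitive of $\partial_k\bigl((\phi-\bar\phi)/\omega'\bigr)$, so likewise $\sup_q\abs{\partial_k(\Phi/\omega')(q,k)}\le 2\pi\,\overline{\abs{\partial_k(\phi/\omega')}}(k)$. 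Replacing $\abs{\Phi}$ and $\abs{\partial_k(\Phi/\omega')}$ by these $k$-dependent envelopes makes \eqref{eq:conservation} applicable (with $n=1$ and $n=0$ respectively, using $W|_{t=0}=\partial_k$), and the two resulting terms are precisely the two norms in \eqref{eq:1-d-bound}, factor $2\pi$ included. In short, the two proofs share the same engine --- the substitution $\partial_q=(\omega'(k)t)^{-1}(W-\partial_k)$, one integration by parts in $k$, and the conserved quantities --- and differ only in how the mean-zero structure in $q$ is exploited: the paper trades it for auxiliary integration variables so that conservation applies with no further work, while you trade it for an elementary estimate on the periodic primitive, which you should state and prove explicitly to complete the argument.
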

\begin{remark}
The hypotheses on $f$, $\phi$ and $\omega$ of Theorem~\ref{thm:main} imply directly that the terms in the upper bound are indeed finite. Since it is sufficient to prove the decay for large values of $t$, this proposition implies the one-dimensional case of Theorem~\ref{thm:main}. 
\end{remark}
\begin{remark}
The hypothesis on compact support is only needed to ensure the absence of boundary terms when integrating by parts. It can be weakened by adding the value(s) of $\frac{\bar f_0 \bar \phi}{\omega'}$ at $\partial K$ to the right-hand-side, provided these values are well-defined.
\end{remark}

\begin{proof}
First, note that as a limiting case of \eqref{eq:conservation} or by using the exact time evolution and a change of variables
\begin{align*}
    \bar f_0(k) 
    &:= \frac{1}{2\pi} \int_{\T} f_0(q,k) \di q  \\
    &= \frac{1}{2\pi} \int_{\T} f_0(q- \omega(k) t,k) \di q =: \frac{1}{2\pi} \int_{\T} f(t,q,k) \di q.
\end{align*}
We insert this in the expression that we need to estimate and use the fundamental theorem of calculus to write
\begin{align*}
  \Bigl|\int_{K}\int_{\T} 
(f(t,q,k)- & \bar f_0 (k) ) \phi(q,k) \di q \di k \Bigr| \\
&=
\left| \frac{1}{2 \pi} \int_{K}\int_{\T}\int_{\T}
(f(t,q,k)-  f (t,q',k) ) \phi(q,k) \di q'\di q \di k  \right| \\
 & = \left|\frac{1}{2 \pi} \int_{K}\int_{\T}\int_{\T} \int_{q'}^{q} 
\partial_q f(t,\tilde q,k) \phi(q,k) \di \tilde q \di q'\di q \di k \right| \\
& \le 
\frac{1}{2 \pi} \int_\T \int_\T \int_\T\left|
\int_{K} \partial_q  f(t,\tilde q,k) \phi(q, k ) \di k
\right| \di \tilde q \di q' \di q. \\
&=   \int_\T \int_\T\left|
\int_{K} \left[\partial_q  f(t,\tilde q,k)\right] \phi(q, k ) \di k
\right| \di \tilde q  \di q. 
\end{align*}
To obtain this inequality, we first used Fubini's theorem to perform the $k$-integral before the others, and then extended the range of $\tilde q$ (which requires inserting the absolute value). The last line is just the observation that the $q'$-dependence has disappeared from the integrand.

We now use $W$ defined in \eqref{eq:def_W} to write $\partial_q = (\omega'(k) t)^{-1}(W - \partial_k)$. The first term will have the required form to apply \eqref{eq:conservation}, and we can integrate by parts (the boundary terms dissappear due to the assumptions on $f$ and $\phi$) to bring the second term in this form as well. This gives
\begin{align*}
&\int_{K} \left[\partial_q  f(t,\tilde q,k) \right]\phi(q, k ) \di k  \\
& \qquad =
t^{-1} \int_{K}\left[W  f\right](t,\tilde q,k)  \frac{ \phi(q, k )}{\omega'(k)} \di k
+ t^{-1} \int_{K} f(t,\tilde q,k) \partial_k \left[\frac{ \phi(q, k )}{\omega'(k)} \right] \di k.
\end{align*}

Inserting this in the bound, we found 
\begin{align*}
    \Bigl|\int_{K}\int_{\T} 
(f(t,q,k)- & \bar f_0 (k) ) \phi(q,k) \di q \di k \Bigr| \\
& \le   t^{-1} \int_{K} \int_\T \frac{\abs{Wf}(t,\tilde q , k)}{|\omega'(k)|} \int_{\T} \abs{\phi(q,k)} \di q \di \tilde q \di k \\
& \qquad +  t^{-1} \int_{K} \int_\T f(t, \tilde q, k ) \int _\T \abs{\partial_k \frac{ \phi(q, k )}{\omega'(k)}} \di q \di \tilde q \di k .
\end{align*}

In both terms we can apply \eqref{eq:conservation}, and thus, using furthermore that $W\bigr|_{t=0}= \partial_k$, we have obtained
\begin{align*}
    \Bigl|\int_{K}\int_{\T} 
(f(t,q,k)- & \bar f_0 (k) ) \phi(q,k) \di q \di k \Bigr| \\
& \le   t^{-1} \int_{K} \int_\T \frac{\abs{\partial_k f_0}(\tilde q , k)}{|\omega'(k)|} \int_{\T} \abs{\phi(q,k)} \di q \di \tilde q \di k \\
& \qquad +  t^{-1} \int_{K} \int_\T f_0( \tilde q, k ) \int _\T \abs{\partial_k \frac{ \phi(q, k )}{  \omega'(k) }} \di q \di \tilde q \di k .
\end{align*}

This can be rewritten in terms of the averages over $\T$ to give \eqref{eq:1-d-bound}.
\end{proof}

\subsection{Localization argument}
In this section we prove Theorem~\ref{thm:1-d-degenerate}. We use the explicit rate of decay and the expression for the upper bound obtained in Proposition~\ref{prop:1-d-bound} allows for extensions when $\omega'(k)$ vanishes at some \emph{energies} in the support of $\phi$. 
We use a simple localization argument to treat the case where $\omega'(k)$ vanishes linearly.
\begin{proof}[Proof of Theorem~\ref{thm:1-d-degenerate}]
Let $0<\epsilon < 1$ to be fixed later. We fix a smooth cutoff function $\chi$ with support in $(-1,1)$, values in $ [0,1]$, and such that $\chi \equiv 1$ in $[-1/2, 1/2]$.
We define $\chi_{i,\epsilon}:= \chi(\frac{k -k_i}{\epsilon})$ and
$\eta_\epsilon := \prod_{i=1}^N (1 - \chi_{i,\epsilon})$. 
Then, we write
$\phi(q,k) = \eta_\epsilon(k) \phi(q,k) + (1-\eta_{\epsilon}(k)) \phi(q,k) $.
Note that $\eta_\epsilon(k) \phi(q,k)$ satisfies the hypotheses of Proposition~\ref{prop:1-d-bound}.
Thus, \eqref{eq:1-d-bound} gives
\begin{align*}
    &\abs{ \int_{K } \int_{\T} (f(t,q,k)- \bar f_0 (k))\eta_\epsilon(k) \phi(q,k) \di q \di k} \\
    &\qquad \le
    \frac{2 \pi  }{t}\norm{\frac{\eta_\epsilon}{ \omega'}}_{L^\infty} \left( \norm{\bar\phi \partial_k  f_0}_{L^1} +  \norm{\bar f_0\partial_k \phi  }_{L^1}\right) \\
  & \qquad \qquad +\frac{  2 \pi  }{t}\norm{\partial_k \left(\frac{\eta_\epsilon}{ \omega'}\right)}_{L^\infty} \norm{\phi  \bar f_0}_{L^1}.  
\end{align*}

Now, we need to extract the $\epsilon$-dependence from the $L^\infty$-norms. Since $\omega''(k_i) \neq 0$, for some $C >0$ and all $\epsilon \le  1$, we have
$$
\inf_{\supp (\eta_\epsilon) \cap \supp (f)} |\omega'(k)| \ge\frac{\epsilon}{C}.
$$ 
This gives the bounds
$$
\norm{\frac{\eta_\epsilon}{ \omega'}}_{L^\infty}  \le \frac{C}{\epsilon}, 
\qquad \norm{\partial_k \left(\frac{\eta_\epsilon}{ \omega'}\right)}_{L^\infty}\le \frac{C}{\epsilon^2}.
$$
We have obtained
\begin{equation} \label{eq:bound_regular_part}
\abs{ \int_{K } \int_{\T} (f(t,q,k)- \bar f (k))\eta_\epsilon(k) \phi(q,k)} 
    \le \frac{C}{t \epsilon^2} .
\end{equation}
On the other hand,
\begin{align*}
    &\abs{ \int_{K } \int_{\T} (f(t,q,k)- \bar f_0 (k))(1-\eta_\epsilon(k)) \phi(q,k)} \\
    &\qquad\le  2 \int_{\R^+ }  (1-\eta_\epsilon(k))  \int_{\T} f(t,q,k) \di q \di k \\
    & \qquad \le C \epsilon  \norm{\bar f_0}_{L^\infty} .
\end{align*}
We sum with \eqref{eq:bound_regular_part}, evaluate at some $T >1$  and pick $\epsilon = T^{-1/3}$, to obtain
\begin{align*}
    &\abs{ \int_{K } \int_{\T} (f(T,q,k)- \bar f (k)) \phi(q,k)}   \le
   C T^{-1/3},
\end{align*}
since for small $T$, both terms are bounded, this implies the result.
\end{proof}

\subsection{Improved decay}
If the initial condition is more regular, we can improve the estimate on the decay. 
To this end, we use the following $L^1$-version of Poincaré's inequality.
\begin{lemma}[Poincaré's inequality] \label{lem:poincare}
Assume that $g: \T \mapsto \R$ is a periodic function of class $C^l$ and $g(x)=0$ for some $x \in [0,2\pi)$. Then, for all $l \in \N$,
$$
\int_\T |g(q)| \di q  \le (\pi)^l \int_\T |g^{(l)}(q)| \di q.
$$
\end{lemma}
\begin{proof}
Without loss of generality, we may assume that $x =0$.
Then,
\begin{align*}
   \int_0^\pi  |g(q)| \di q 
   &= \int_0^\pi \abs{\int_0^s g'(r) \di r} \di s \\
   & \le \int_0^\pi \abs{g'(r)}\int_r^\pi \di s \di r \le \pi  \int_0^\pi \abs{g'(r)} \di r.
\end{align*}
Treating the contribution to the $L^1$-norm of the interval $[\pi, 2 \pi]$ analogously, we find that
$$
\int_\T |g(q)| \di q  \le \pi \int_\T |g'(q)| \di q.
$$
For the case $l\ge 2$, we proceed by induction. By periodicity $\int_{\T}g^{(l-1)}=0$, so $g^{(l-1)}(x) = 0$ for some $x \in \T$, and we can iterate the argument.
\end{proof}

As a consequence, we can obtain a faster rate of decay for more regular initial data and observables. For the sake of readability, we assume that the support of is $\phi$ compact (bounded away from the boundary of $K$), though it is possible to relax this to suitable decay of the functions and their derivatives.

\begin{proof}[Proof of Theorem~\ref{thm:1-d-improved}]
As in the proof of Proposition~\ref{prop:1-d-bound}, we bound
\begin{align*}
  \Bigl|\int_{K}\int_{\T} 
(f(t,q,k)- & \bar f_0 (k) ) \phi(q,k) \di q \di k \Bigr| \\
&\le  \int_\T \int_\T\left|\partial_{\tilde q} 
\int_{K}  f(t,\tilde q,k) \phi(q, k ) \di k
\right| \di \tilde q  \di q. 
\end{align*}
We then use Lemma~\ref{lem:poincare}
to insert $l-1$ additional derivatives:
\begin{align*}
  \Bigl|\int_{K}\int_{\T} 
(f(t,q,k)- & \bar f_0 (k) ) \phi(q,k) \di q \di k \Bigr| \\
&\le  \pi^{l-1}\int_\T \int_\T\left|\partial_{\tilde q}^{l} 
\int_{K}  f(t,\tilde q,k) \phi(q, k ) \di k
\right| \di \tilde q  \di q \\
&\le t^{-l} \pi^{l-1}\int_\T \int_\T\left|
\int_{K} \left(\frac{W - \partial_k}{\omega'(k)}\right)^{l}  f(t,\tilde q,k) \phi(q, k ) \di k
\right| \di \tilde q  \di q.
\end{align*}
In the previous expression, we keep in mind that the operator $W$ only affects the variables denoted by  $k$ and $\tilde q$, not $q$.
Expanding the product makes appear $2^l$ terms. In order to integrate by parts, we iterate the identities
\[
[\partial_K, W] = \omega''(k) t \partial_q =\frac{\omega''(k)}{\omega'(k)} \left( W- \partial_k \right)
\]
and for any sufficiently regular function $g(k)$,
\[
[W, g(k)]= [\partial_K, g(k)] = g'(k).
\]
This allows to obtain an identity of the form
\[
\left(\frac{W-\partial_k}{\omega'(k)}\right)^{l} = \sum_{j=0}^{l}\sum_{m=0}^{l} g_{j,m}(k) \partial_k^{j} W^{m},
\]
where each of the functions $g_{n,m}(k)$ is a complicated expression containing powers of $(\omega')^{-1}$ and its derivatives up to order $l-(m+j)$.
In each term, we integrate by parts in $K$ to obtain
\begin{align*}
    \left| 
\int_{K} \left(\frac{W-\partial_k}{\omega'(k)}\right)^{l} f(t,\tilde q,k) \phi(q, k ) \di k
\right| 
&\le
\sum_{j=0}^{l}\sum_{m=0}^{l}  \left|\int_K W^m f(t,\tilde q, k) \partial_k^j (g_{j,m} (k) \phi(k))\di k\right|. \\
&\le \sum_{j=0}^{l}\sum_{m=0}^{l} \norm{\partial_k^j (g_{j,m} \phi)}_{L^\infty} \int_K\left| W^m f(t,\tilde q, k)\right| \di k.
\end{align*}
Finally, we apply $\eqref{eq:conservation}$ to bound
\begin{align*}
  \Bigl|\int_{\R_+}\int_{\T} 
(f(t,q,k)- & \bar f_0 (k) ) \phi(q,k) \di q \di k \Bigr| \\
&\le t^{-l} \pi^{l} \sum_{j=0}^{l}\sum_{m=0}^{l} \norm{\partial_k^j (g_{j,m} \phi)}_{L^\infty} \int_\T\int_K\left| W^m f(t,\tilde q, k)\right| \di \tilde q  \\
&= t^{-l} \pi^{l} \sum_{j=0}^{l}\sum_{m=0}^{l} \norm{\partial_k^j (g_{j,m} \phi)}_{L^\infty} \norm{\partial_k^m f_0}_{L^1} \qedhere
\end{align*}
\end{proof}

\section{Proofs for $d\ge 2$} \label{sec:higher-d}

In this section we prove mixing for the $d$-dimensional case.
In this case, we define the $d$ vector fields
\begin{equation} \label{eq:def_W_d_dimensions}
    W = t D\omega(k) \nabla_q + \nabla_k \qquad \text{ or } \quad W_j = t \sum_{i=1}^d ( \partial_{k_j}\omega_i(k)) \partial_{q_i} + \partial_{k_j}. 
    \end{equation}

As before, $W$ commutes with the Liouville operator.
If $D\omega$ is an invertible matrix, most of the proof goes through as before. For the sake of completeness, we state Theorem~\ref{thm:main} with an explicit bound on the right-hand-side. 
To this end, we define the matrix norm
$$
|M|_{\infty} = \max_{i,j} |M_{i,j}|.
$$

\begin{proposition} \label{prop:d>1}
Let $f(t,q,k)$ be the solution to \eqref{eq:transport} with initial datum $f_0 \in C^1(\T^d\times K)$. Assume that $\phi \in C^1_c(\T^d \times K)$, and that 
\begin{equation*}
    \det D\omega (k) \neq 0, \quad \text{ for all } k \in K.
\end{equation*}
Then, $M : = (D\omega(k))^{-1}$ is well-defined and
\begin{align*}
  &  \abs{\int_{K}\int_{\T}
(f(t,q,k)- \bar f_0 (k) ) \phi(q,k) \di q \di k}  \\
&\quad \le 
 \frac{2\pi d}{t} \left( \norm{|M|_{\infty} \phi}_{L^\infty} \sum_{i=1}^d \norm{\partial_{k_i} \bar f_0}_{L^1} +\norm{\nabla_k \cdot M \phi \hat{e_j}}_{L^\infty} \norm{\bar f_0}_{L^1}  \right).
\end{align*}
\end{proposition}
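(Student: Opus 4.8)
The plan is to mirror the one-dimensional argument of Proposition~\ref{prop:1-d-bound}, with the scalar factor $(\omega' t)^{-1}$ replaced by the matrix $M = (D\omega)^{-1}$. The hypothesis $\det D\omega(k)\neq 0$ enters precisely here: it guarantees that $M$ is a well-defined $C^1$ matrix field, so the identity $W = tD\omega\nabla_q + \nabla_k$ can be inverted to give $\nabla_q = \tfrac1t M(W - \nabla_k)$, that is, componentwise $\partial_{q_i} = \tfrac1t\sum_{j}M_{i,j}(W_j - \partial_{k_j})$. I would first record, as in one dimension, that the $q$-average is conserved: integrating \eqref{eq:transport} over $\T^d$ gives $\partial_t\bar f = 0$, hence $\bar f(t,k)=\bar f_0(k)$, and therefore
$$
f(t,q,k)-\bar f_0(k) = \frac{1}{(2\pi)^d}\int_{\T^d}\bigl(f(t,q,k)-f(t,q',k)\bigr)\,\di q'.
$$

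Next I would express the difference $f(t,q,k)-f(t,q',k)$ by the fundamental theorem of calculus along an axis-parallel path on $\T^d$, changing one angle at a time, which telescopes into a sum over the $d$ directions of one-dimensional integrals of $\partial_{q_i}f$. Substituting the inverse relation for each $\partial_{q_i}$ splits the estimate into a ``$W_j$ part'' and a ``$\partial_{k_j}$ part.'' For the $\partial_{k_j}$ part I would integrate by parts in $k_j$, the boundary terms vanishing because $\phi\in C^1_c$, so that the derivative lands on $M_{i,j}\phi$; this is the origin of the factor $\nabla_k\cdot(M\phi\,\hat e_j)$ in the bound. For the $W_j$ part I would use the commutation of $W_j$ with the Liouville operator: since $W_j f$ again solves \eqref{eq:transport} and $W_j\big|_{t=0}=\partial_{k_j}$, one has the identity $\int_{\T^d}W_jf(t,q,k)\,\di q = (2\pi)^d\,\partial_{k_j}\bar f_0(k)$ for every $t$, which is the signed refinement of \eqref{eq:conservation} responsible for the appearance of the \emph{averaged} derivatives $\partial_{k_i}\bar f_0$ in the statement. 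Bounding the remaining $k$-weight by $|M_{i,j}|\le|M|_\infty$ and the $\phi$-factor in $L^\infty$, summing over the $d$ directions (which produces the prefactor $d$), and collecting the powers of $2\pi$ from the torus integrations then yields the stated estimate.

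The genuinely new difficulty compared with the one-dimensional case is organising these $q$-integrations so that the conserved quantities $\bar f_0$ and $\partial_{k_i}\bar f_0$ -- rather than the full $f_0$ -- control the right-hand side: the multiplier accompanying each $W_j f$ (respectively $f$) must be made independent of the angle over which $W_j f$ (respectively $f$) is integrated, so that the torus average can be taken. Once the telescoping path and the order of integration are chosen to achieve this decoupling, the matrix algebra, the integration by parts in $k$, and the index bookkeeping become routine, and the only remaining analytic point -- the vanishing of all boundary terms -- is guaranteed by the compact support of $\phi$.
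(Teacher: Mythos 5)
Most of your outline coincides with the paper's proof: conservation of the angle average, the telescoping fundamental-theorem-of-calculus representation on $\T^d$, the inversion $\nabla_q = \tfrac{1}{t}M(W-\nabla_k)$, and the integration by parts in $k$ with boundary terms killed by the compact support of $\phi$ are exactly the paper's steps. The gap is in your treatment of the $W_j$ part. You replace the paper's absolute-value conservation law \eqref{eq:conservation} by the signed identity $\int_{\T^d}W_jf(t,q,k)\,\di q=(2\pi)^d\,\partial_{k_j}\bar f_0(k)$, and you yourself note that using it requires the multiplier of $W_jf$ to be independent of the angles over which $W_jf$ is integrated. That decoupling cannot be arranged, and not for lack of clever bookkeeping: after telescoping, $W_if$ is evaluated at $\tilde q=(q_1,\dots,q_{j-1},s,q'_{j+1},\dots,q'_d)$, the weight $\phi(q,k)$ depends on the shared coordinates $q_1,\dots,q_{j-1}$, and the $s$-integral runs over $[q'_j,q_j]$ rather than the full circle. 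Any kernel representing $f(t,q,k)-\bar f(t,k)$ in terms of $\nabla_q f$ must couple the integration angle to $q$, so the signed torus average of $W_jf$ can never be isolated.

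In fact, the inequality you are aiming at --- with $\|\partial_{k_i}\bar f_0\|_{L^1}$ and $\|\bar f_0\|_{L^1}$ read literally --- is false, so no reorganization of the integrals can prove it. Take $f_0(q,k)=g(k)\cos q_1$ and $\phi(q,k)=\psi(k)\cos q_1$ with $0\le g,\psi\in C^1_c(K)$ not identically zero. Then $\bar f_0\equiv0$ and $\partial_{k_i}\bar f_0\equiv0$, so the right-hand side vanishes for every $t$, while
\begin{equation*}
\int_K\int_{\T^d}\bigl(f(t,q,k)-\bar f_0(k)\bigr)\phi(q,k)\di q\di k=\pi(2\pi)^{d-1}\int_K g(k)\psi(k)\cos\bigl(\omega_1(k)t\bigr)\di k,
\end{equation*}
which is strictly positive for small $t>0$. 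This is precisely why the paper proceeds with absolute values: extend the $s$-range to the whole circle at the cost of an absolute value, bound $|M|_\infty|\phi|$ in $L^\infty$, and use that $|W_if|$ also solves \eqref{eq:transport}, so that $\iint|W_if|(t,q,k)\di q\di k=\iint|\partial_{k_i}f_0|(q,k)\di q\di k$. The price is that the bound controls the left-hand side by the average of $|\partial_{k_i}f_0|$ (equivalently $\|\partial_{k_i}f_0\|_{L^1(\T^d\times K)}$ up to a factor $(2\pi)^{-d}$), not by the derivative of the average; the notation $\|\partial_{k_i}\bar f_0\|_{L^1}$ in the statement has to be read in this absolute-value sense, which is what the paper's proof actually delivers. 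The same remark applies to your second term: after integrating by parts you need conservation of $\iint|f|$ rather than of the signed average, although there the distinction is harmless when $f_0\ge0$.
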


\begin{proof}
As before, we express the left-hand-side as
\begin{align*}
     & \Bigl|\int_{K}\int_{\T^d} 
(f(t,q,k)-  \bar f_0 (k) ) \phi(q,k) \di q \di k \Bigr| \\
&\qquad= \frac{1}{(2\pi)^d} \abs{ \int_{\T^d } \int_{\T^d } 
\int_{K} (f(t,q,k) - f(t,q',k)) \phi(q,k) \di k \di q \di q'}.
\end{align*}
We then write
\begin{align*}
    f(t,q,k) - f(t,q',k) = \sum_{j=1}^d \int_{q'_j}^{q_j} 
    \partial_{q_j} f(t,q_{1}, \cdots q_{j-1}, s, q'_{j+1}, \cdots, q'_d , k ) \di s
\end{align*}
and bound
\begin{align*}
     & \Bigl|\int_{K}\int_{\T^d} 
(f(t,q,k)-  \bar f_0 (k) ) \phi(q,k) \di q \di k \Bigr| \\
&\le \frac{1}{(2\pi)^d}\sum_{j=1}^d  \int_{\T^d } \int_{\T^d } \int_0^{2\pi} \abs{
\int_{K} \partial_{q_j} f(t,q_{1}, \cdots q_{j-1}, s, q'_{j+1}, \cdots, q'_d , k ) \phi(q,k) \di k} \di s \di q \di q'.
\end{align*}
We now write
$\partial_{q_j} = t^{-1} [M(W - \nabla_k)]_j $.
Thus, by using the divergence theorem for the second term (and using the compact support of $\phi$ to conclude the absence of boundary terms), we obtain
\begin{align*}
  &  \abs{\int_{K} \partial_{q_j} f(t,q_{1}, \cdots q_{j-1}, s, q'_{j+1}, \cdots, q'_d , k ) \phi(q,k) \di k} \\
  &\quad  \le \frac{1}{t} \int_{K}\abs{ [M(k) W]_j f(t,q_{1}, \cdots q_{j-1}, s, q'_{j+1}, \cdots, q'_d , k )\phi(q,k) }\di k  \\
  &\qquad +  \frac{1}{t} \int_{K}\abs{  f(t,q_{1}, \cdots q_{j-1}, s, q'_{j+1}, \cdots, q'_d , k )(\nabla_k \cdot M(k)^\top \phi(q,k) \hat{e_j})}\di k \\
  &\quad \le \frac{\norm{|M|_{\infty} \phi}_{L^\infty}}{t} \sum_{i=1}^d\int_{\R_+^d}\abs{ W_i f(t,q_{1}, \cdots q_{j-1}, s, q'_{j+1}, \cdots, q'_d , k ) }\di k \\
   &\qquad +  \frac{\norm{\nabla_k \cdot M^\top \phi \hat{e_j}}_{L^\infty}}{t} \int_{K}\abs{  f(t,q_{1}, \cdots q_{j-1}, s, q'_{j+1}, \cdots, q'_d , k )}\di k.
\end{align*}
Inserting this in the previous bound and using \eqref{eq:conservation}, we obtain
\begin{align*}
     & \Bigl|\int_{K}\int_{\T^d} 
(f(t,q,k)-  \bar f_0 (k) ) \phi(q,k) \di q \di k \Bigr| \\
&\le \frac{2\pi d}{t} \left( \norm{|M|_{\infty} \phi}_{L^\infty} \sum_{i=1}^d \norm{\partial_{k_i} \bar f_0}_{L^1} +\norm{\nabla_k \cdot M^\top \phi \hat{e_j}}_{L^\infty} \norm{\bar f_0}_{L^1}  \right) \qedhere
\end{align*}
\end{proof}

Finally, we prove Corollary~\ref{cor:coulomb}.
\begin{proof}[Proof of Corollary~\ref{cor:coulomb}]
For fixed $x_0 \in \R^d$, we write $\phi_{x_0}$ for the fundamental solution to Poisson's equation in dimension $d$. 
In particular, 
$$
\phi_{x_0}(x) \begin{cases}
(x-x_0)\id_{[-\infty,x_0]} + \max(0, x_0) &\text{ if } d=1 \\
-(2\pi)^{-1}\ln (|x-x_0|)  &\text{ if } d= 2 \\
\kappa_d |x-x_0|^{-d+2} &\text{ if } d= 3 ,
\end{cases}
$$
for a suitable constant $\kappa_d$.
So we can write
\begin{align*}
    V_F(x_0) = \int_{\R^d} \int_{\R^d} \phi_{x_0} (x) F(t,x,v) \di x \di v 
    &=  \int_{\T^d} \int_{K} \varphi_{x_0} (q,k)  f(t,q,k) \di k \di q,
    \end{align*}
   where we used the notation $ f = F\circ N$ and $ \varphi_{x_0} = \phi_{x_0} \circ N$. 
Now, the integral is in a suitable form to apply the arguments in the proofs of Theorems~\ref{prop:1-d-bound} and \ref{prop:d>1}, provided that $\varphi_{x_0}$ is sufficiently regular. The Coulomb kernel $\phi_{x_0}$ belongs to the Sobolev space $ W^{1, 1}_{\rm loc}(\R^d \times \R^d)$, since the integral of its derivative in a ball is finite. Outside a sufficiently large ball, the function and its derivatives are bounded. Since we assume that $N$ is of class $C^1$, $\varphi_{x_0}$ inherits these properties. Thus, $\nabla_k  \varphi_{x_0} \in L^1 + L^\infty$. For the $L^\infty$-part, we can apply Proposition~\ref{prop:d>1} directly, and for the $L^1$ part we switch the roles of $f$ and $\phi$ in the proof of Proposition~\ref{prop:d>1}.
\end{proof}

\end{document}